\documentclass[aps,prd,reprint,superscriptaddress,nofootinbib,longbibliography]{revtex4-2}
\usepackage{amsmath,amssymb,amsfonts}
\usepackage{orcidlink}
\usepackage{fnpct}
\usepackage{xcolor}
\usepackage{tikz}
\usepackage{hyperref}
\hypersetup{colorlinks=True}

\newtheorem{theorem}{Theorem}
\newtheorem{lemma}{Lemma}
\newtheorem{proposition}{Proposition}
\newtheorem{corollary}{Corollary}
\newtheorem{proof}{Proof}
\newtheorem{definition}{Definition}
\newtheorem{remark}{Remark}
\newtheorem{assumption}{Assumption}

\newcommand{\Dsc}{D_{\rm sc}}
\newcommand{\Msc}{M_{\rm sc}}
\newcommand{\Ramps}{\mathcal R_{\rm AMPS}}
\newcommand{\rev}[1]{\textcolor{blue}{#1}}
\newcommand{\revp}[1]{\textcolor{black}{#1}}
\begin{document}

\title{Entropy bound and the non-universality of entanglement islands}

\author{Naman Kumar\,\orcidlink{0000-0001-8593-1282}}
\affiliation{Department of Physics, Indian Institute of Technology Gandhinagar, Palaj, Gujarat, 382355, India}
\email{namankumar5954@gmail.com, naman.kumar@iitgn.ac.in}
\date{\today}
\begin{abstract}
The island prescription reproduces the Page curve and avoids the Almheiri--Marolf--Polchinski--Sully (AMPS) monogamy conflict for suitable radiation subsystems by placing the corresponding interior partners in their entanglement wedges. This mechanism is intrinsically region-dependent. We define a common compact semiclassical support as a finite semiclassical domain that contains representatives of the interior partners of a chosen family of late Hawking modes, while each radiation subsystem need reconstruct only its own partner. We ask whether these separate mode-by-mode reconstructions can be represented within such a common support at a fixed evaporation epoch. The physical motivation is that horizon normalcy---the requirement that the horizon remain smooth for an infalling observer and that semiclassical effective field theory remain valid across it---is a local property, whereas the island condition that resolves AMPS for a given late mode depends on the radiation subsystem used for reconstruction. Using a bounded null realization of the common support together with the quantum Bousso bound, we show that no regular common compact semiclassical support can realize all of the relevant AMPS partner sectors once the support enters the hyperentropic regime. The argument is formulated entirely within the semiclassical region in which the AMPS setup is valid and does not require extending the null surface to the Planckian singularity.
\end{abstract}

\maketitle

\section{Introduction}
\label{sec:intro}

The black hole information paradox arises from the tension between Hawking's semiclassical calculation, which gives an approximately thermal radiation state, and the expectation that black-hole formation and evaporation are described by unitary quantum evolution~\cite{Hawking:1975vcx}. The \revp{Almheiri--Marolf--Polchinski--Sully (AMPS)} argument sharpened this tension by adding the requirement of horizon smoothness~\cite{Almheiri:2012rt}. After the Page time~\cite{Page:1993wv,Almheiri:2020cfm}, a late Hawking mode $B$ must be purified by degrees of freedom in the early radiation in order to maintain unitarity. At the same time, semiclassical \revp{effective field theory (EFT)} near a smooth horizon requires $B$ to be purified by an interior partner, which we denote by $A_B$. If the radiation purifier and $A_B$ are independent systems, the two purifications violate monogamy of entanglement.

The island prescription provides a compelling semiclassical mechanism for avoiding this contradiction. For a radiation region $R$, the entropy is computed by extremizing the generalized entropy over candidate islands~\cite{Penington:2019npb,Almheiri:2019psf,Almheiri:2019hni,Almheiri:2019yqk,Almheiri:2019qdq,Penington:2019kki},
\begin{equation}
S(R)=\min_I\,\mathrm{ext}\left[
\frac{A(\partial I)}{4G}+S_{\rm out}(R\cup I)
\right],
\label{eq:island-formula}
\end{equation}
\revp{where $G$ is Newton's constant and $S_{\rm out}(R\cup I)$ is the renormalized bulk-matter entropy of the quantum fields on $R\cup I$.} After the Page time, the dominant saddle can include an island $I(R)$, so that interior degrees of freedom lie in the entanglement wedge of $R$. In that case, the interior partner and the radiation purifier need not be independent systems; they can be different representations of the same logical degree of freedom.

This paper does not question the standard, region-dependent island prescription. \revp{We ask whether the family of subsystem-dependent AMPS reconstructions can be represented within one compact semiclassical support common to all relevant late modes. If the answer is negative, then region dependence is not merely a convenient way of implementing the island prescription. Rather, it becomes necessary to avoid the common-support obstruction in which a single finite semiclassical support becomes hyperentropic and incompatible with the quantum Bousso bound \cite{Bousso:1999xy,Strominger:2003br,Bousso:2014sda,Bousso:2015mna}.} The motivation is that horizon normalcy---the statement that the horizon is smooth for an infalling observer and that semiclassical gravity and EFT remain
valid across it---is ordinarily a local property of the semiclassical geometry \cite{Bousso:2025udh}, whereas the island condition resolving AMPS for a late mode is subsystem-dependent. If $R_B$ is a radiation subsystem from which the purifier of $B$ can be decoded, the relevant condition is
\begin{equation}
A_B\in {\rm EW}(R_B),
\label{eq:local-condition-intro}
\end{equation}
\revp{where ${\rm EW}(R_B)$ denotes the entanglement wedge of $R_B$.} \revp{For a family $\{(B_i,R_i)\}$, we require only that the partner of $B_i$ have a representative\footnote{Here a representative means a realization of the corresponding logical partner within the stated semiclassical subdomain.}
in a subdomain $d_i$ of one common semiclassical domain $\mathcal D_*$ and that $d_i\subset {\rm EW}(R_i)$. Thus each $R_i$ reconstructs its own partner, not the rest of $\mathcal D_*$. This is the common-support condition tested below.}

The scope of the argument is crucial. AMPS is formulated in semiclassical effective field theory in a regular near-horizon region. We therefore introduce a cutoff semiclassical spacetime $\Msc$ and restrict every domain-of-dependence and null-surface statement to $\Msc$. A bounded null realization in this paper is not a global null completion of the entire black-hole interior and is not required to reach, cross, or resolve the Planckian singularity. It is a bounded representative of the same AMPS support domain inside $\Msc$. \revp{We formulate common support at a fixed evaporation time $u$. The region $B_*(u)$ may move as the black hole evolves; ``common'' means only that the same support is used for the chosen family of decoders at that epoch.}

The logic of the result has two steps. First, the partners of operationally distinguishable late Hawking modes, meaning modes that can be resolved as distinct semiclassical excitations, load independent logical sectors into a \revp{common} support, while the boundary area of that support remains bounded by the black-hole area. The support therefore becomes hyperentropic. \revp{Second, if a bounded quantum lightsheet has the same domain of dependence as the support and obeys the quantum Bousso bound, it requires the opposite inequality. Hence, after the crossover, either the common support fails, the bounded semiclassical null realization fails, or the quantum Bousso bound fails to apply.} If bounded null realizability and the entropy bound are imposed as requirements for a regular semiclassical common-support realization of the AMPS interior, then no \revp{common compact partner support} satisfying these conditions can exist.

\revp{Moreover, the entropy comparison follows the same domain-of-dependence logic used in entropy-based singularity theorems. Bousso and Shahbazi-Moghaddam use $D(B)=D(L)$ together with unitarity to identify the entropy of spatial and null representatives~\cite{Bousso:2022cun}. For the entropy inequality itself, however, we use a quantum Bousso bound adapted to semiclassical gravity rather than the classical covariant bound. The relevant null segment is seeded by a quantum extremal surface (QES) and controlled by restricted quantum focusing \cite{Shahbazi-Moghaddam:2022hbw}.}

This formulation also clarifies the relation to entropy-based singularity arguments. Such arguments diagnose null incompleteness or obstruction to further semiclassical evolution when a hyperentropic region cannot possess the required regular null development~\cite{Bousso:2022cun,Bousso:2022tdb}. Their conclusion is not that a null surface must be followed into a pre-existing Planckian singularity. Rather, the failure of a regular bounded semiclassical development is itself the obstruction. In the present problem this failure is one branch of the \revp{common-support} dichotomy.

The paper is organized so that the physical question precedes the technical theorem. Section~\ref{sec:normalcy} states the mode-by-mode AMPS criterion and the global-local gap. Section~\ref{sec:definitions} defines the semiclassical domain and all terms entering the theorem. Section~\ref{sec:qes} constructs the relevant quantum-lightsheet candidate. Section~\ref{sec:obstruction} states the entropy assumptions, the hyperentropic crossover, and the \revp{common-support} dichotomy. Section~\ref{sec:discussion} explains the implications for region dependence, time dependence, covariance, and solvable island models.

\section{Horizon normalcy, the global-local gap and covariance}
\label{sec:normalcy}

We now develop the physical motivation for the common-support question before turning to the technical definitions and entropy theorem. The island prescription is often invoked not merely to reproduce the Page curve, but also to explain why the horizon remains smooth for an infalling observer. These are related but logically distinct claims. Page-curve restoration is a global entropy statement for a selected radiation system, whereas horizon normalcy is a local, mode-by-mode statement about the semiclassical neighborhood of a horizon-crossing event \cite{Bousso:2025udh}. The common-support question asks whether these region-dependent reconstructions can be represented within one compact semiclassical interior support.

\subsection{AMPS after the Page time}
\label{subsec:amps}

Let $B$ be a late Hawking wave-packet mode near the horizon, localized in the regular semiclassical regime in which the AMPS argument is formulated, and let $R$ denote the radiation already emitted by the time of the horizon crossing. After the Page time, unitarity implies that $B$ is purified by degrees of freedom in the radiation. More precisely, there exists a decoded mode $\widetilde B\subset R_B\subseteq R$ such that
\begin{equation}
S(B\widetilde B)\ll 1.
\label{eq:decoded-purifier}
\end{equation}
Horizon normalcy requires a different-looking purification: $B$ must be entangled with an interior partner $A_B$ localized behind the horizon,
\begin{equation}
S(BA_B)\ll 1,\qquad S(B)=O(1).
\label{eq:interior-purifier}
\end{equation}
If $A_B$ and $\widetilde B$ are independent systems, Eqs.~\eqref{eq:decoded-purifier} and~\eqref{eq:interior-purifier} violate monogamy. The island prescription resolves this by making $A_B$ part of the radiation entanglement wedge. Then $A_B$ and $\widetilde B$ are not independent degrees of freedom; they are two descriptions of the same logical mode.

The condition needed for the mode $B$ is therefore
\begin{equation}
A_B\in {\rm EW}(R_B),
\label{eq:AB-in-EW}
\end{equation}
where $R_B$ is a radiation subsystem from which the purifier of $B$ can be decoded. This is the precise local condition behind the island resolution of AMPS. It is imposed in the regular semiclassical domain containing the late mode and its EFT partner; it is not a statement about the Planckian completion of the black-hole interior.

\subsection{The global-local gap}
\label{subsec:global-local}

The Page curve is a global entropy statement. In the standard calculation, one chooses a radiation system $R$, evaluates Eq.~\eqref{eq:island-formula}, and finds a post-Page saddle with a nontrivial island. This shows that the entropy of that chosen system follows the expected unitary behavior. It does not automatically show that every late mode $B$ has its particular partner $A_B$ in the wedge of every operationally relevant decoding subsystem $R_B$.

This is the global-local gap:
\begin{equation}
\begin{split}
&\text{Page-curve restoration} \\
&\qquad \not\Rightarrow\quad
A_B\in {\rm EW}(R_B)\;\text{for every late mode }B .
\end{split}
\label{eq:global-local-gap}
\end{equation}
The distinction matters because AMPS is formulated mode by mode. The paradox is avoided for a mode $B$ precisely when Eq.~\eqref{eq:AB-in-EW} holds. Thus the island prescription gives a conditional local resolution:
\begin{equation}
\text{AMPS avoided for }B
\quad\Longleftrightarrow\quad
A_B\in {\rm EW}(R_B).
\label{eq:amps-iff}
\end{equation}
This conditionality is not a failure of islands; it is a feature of entanglement wedge reconstruction. But it shows why Page-curve restoration alone does not exhaust the content of horizon normalcy.

A common support would close the gap in the minimal way needed for a single common semiclassical interior. What we mean by this is that every partner is represented in one compact semiclassical domain, while each radiation subsystem reconstructs only its own partner. %The stronger requirement that the entire domain lie in every entanglement wedge is not imposed.
The theorem below tests whether this weaker common-support completion can remain a regular semiclassical object. The bounded-null condition introduced later concerns only this finite support inside the semiclassical domain, not a global extension through the Planckian singularity.

\subsection{Region dependence and horizon normalcy}
\label{subsec:region-normalcy}

The region dependence of islands is not by itself problematic. Entanglement wedge reconstruction is naturally subsystem-dependent, and different boundary or radiation regions can reconstruct different bulk regions. The issue arises because the property being reconstructed is horizon normalcy. Smoothness of the horizon is normally regarded as a local feature of the semiclassical spacetime, not as a property that changes with the auxiliary radiation subsystem chosen by an outside decoder.

In this sense, the island resolution of AMPS has two logically distinct layers. The first is the Page-curve layer: the entropy of a chosen radiation system is computed correctly by including islands. The second is the normalcy layer: the interior partner required by local effective field theory must be identified with radiation degrees of freedom so that monogamy is not violated. The first layer is global and entropic; the second is local and mode-dependent. The \revp{common-support} question asks whether the second layer can be promoted from a family of region-dependent statements to a single compact-support statement.

The entropy theorem answers this question negatively under the stated semiclassical assumptions. If all AMPS-relevant partners are represented in one compact support, \revp{entropy eventually exceeds the amount allowed by the boundary area.} Thus \revp{even the weaker common-support realization of normalcy} is not available as a regular semiclassical construction after the hyperentropic crossover. The ordinary region-dependent prescription avoids the contradiction precisely by not requiring such a support.

This conclusion does not say that the individual region-dependent conditions $A_B\in {\rm EW}(R_B)$ are inconsistent. Rather, it says that their validity for appropriate radiation systems cannot in general be replaced by a single compact support common to every decoding problem. Region dependence is therefore not merely a technical feature of how the island formula is evaluated; it is part of how the framework avoids overloading one finite semiclassical interior domain with all late-time logical partners\footnote{By a logical partner we mean the interior degree of freedom that is entangled with a given late Hawking mode in the semiclassical code subspace. The term ``logical'' emphasizes that this degree of freedom is defined at the level of the encoded semiclassical information and need not be identified with the algebra of a sharply localized spacetime region.}
.

\subsection{Covariance tension}
\label{subsec:covariance}

The normalcy issue becomes sharper when one considers covariance. A semiclassical spacetime admits many valid Cauchy foliations. Different slices can intersect the outgoing Hawking radiation in different spatial regions, and the radiation subsystem used to decode a given late mode can therefore be represented as $R_B(\Sigma)$ on a slice $\Sigma$. The island criterion becomes
\begin{equation}
A_B\in {\rm EW}\bigl(R_B(\Sigma)\bigr).
\label{eq:slice-condition}
\end{equation}
Since the wedge is obtained by extremizing the generalized entropy for the chosen region, Eq.~\eqref{eq:slice-condition} is not manifestly a slice-independent statement about the same horizon-crossing event.

This does not unambiguously establish a violation of general covariance. Rather, we call this a \emph{covariance tension}. A local geometric statement, namely horizon normalcy, is being represented by a reconstruction condition that depends on a choice of radiation subsystem and potentially on how that subsystem is realized on a Cauchy slice. In ordinary semiclassical gravity, different foliations should describe the same local horizon physics. If no-drama is encoded only through Eq.~\eqref{eq:slice-condition}, then the island prescription by itself does not give a manifestly covariant normalcy criterion.

This tension is complementary to Bousso's argument that horizon normalcy cannot consistently depend on the state of distant Hawking radiation without either violating general covariance or introducing extreme nonlocality~\cite{Bousso:2025udh}. In that argument, the central concern is state dependence: whether a horizon-crossing event is normal can depend on which radiation state is present on a distant slice. Here the concern is reconstruction dependence: even before specifying a state-dependent interior map, the condition $A_B\in {\rm EW}(R_B)$ inherits the subsystem dependence of entanglement wedge reconstruction. The two concerns point in the same direction: a fully satisfactory account of the horizon interior should explain how local normalcy emerges as a covariant property rather than as a slice- or subsystem-dependent reconstruction statement.

The \revp{common-support} question is therefore a diagnostic of this covariance tension. A single compact support would provide one geometric interior for all mode-by-mode reconstructions without requiring every decoder to reconstruct the whole interior. The result below shows that this candidate cannot simultaneously \revp{contain all partner sectors, admit the bounded semiclassical null realization expected of a regular AMPS support,} and remain compatible with the \revp{assumed semiclassical entropy bound}. The obstruction leaves the standard family of region-dependent reconstructions intact, while showing why it cannot be \revp{represented by one compact, slice-independent semiclassical interior support}.

\section{Semiclassical scope and precise definitions}
\label{sec:definitions}
We now give precise definitions of the terms that are crucial for the discussion of our result.

\begin{definition}[Semiclassical AMPS domain]
\label{def:semiclassical-domain}
Let $\Msc$ be a globally hyperbolic region of the evaporating spacetime in which curvatures, stress-tensor fluctuations, and backreaction remain within the domain of semiclassical effective field theory. The region contains the near-horizon wave packets $B$, their EFT partners $A_B$, the relevant quantum extremal surfaces, and the portions of the radiation systems used in the AMPS experiment. Its boundary may include regular cutoff surfaces chosen before Planckian curvature or other breakdown of the semiclassical approximation. For an achronal set $X\subset\Msc$, define
\begin{equation}
\Dsc(X)\equiv D_{\Msc}(X),
\label{eq:Dsc-definition}
\end{equation}
the domain of dependence computed within $\Msc$.
\end{definition}

All causal statements below are statements in $\Msc$. In particular, no assumption requires a null generator to extend to the global black-hole singularity. It is important to note that the cutoff is not used to remove an entropy obstruction; it specifies the regime in which AMPS, the QES prescription, and the semiclassical entropy bound are simultaneously meaningful.

\begin{definition}[AMPS-relevant radiation region]
\label{def:amps-region}
For an operationally distinguishable late mode $B$, a radiation subsystem $R_B$ is AMPS-relevant if: (i) its algebra contains a decoded purifier $\widetilde B$ satisfying Eq.~\eqref{eq:decoded-purifier}; (ii) the generalized-entropy extremization defining ${\rm EW}(R_B)$ is well-defined in the semiclassical code subspace under consideration; and \revp{(iii) the QES and the code-subspace reconstruction map needed to test $A_B\in {\rm EW}(R_B)$ are both contained within $\Msc$.} The collection of all such subsystems is denoted by $\Ramps$.
\end{definition}

The definition does not require Eq.~\eqref{eq:AB-in-EW} to hold. It identifies precisely the subsystems for which that condition must be tested in order to resolve AMPS for the corresponding mode.

\begin{definition}[\revp{Common compact partner support}]
\label{def:common-support}
\revp{Fix an evaporation time $u$ and a family $\{(B_i,R_i)\}_{i=1}^{N(u)}$. Let $\Sigma_u$ be a Cauchy slice of $\Msc$ and let $B_*(u)\subset\Sigma_u\cap\Msc$ be a compact partial Cauchy region with edge $\sigma_*(u)\equiv\partial B_*(u)$. Define}
\begin{equation}
\mathcal D_*(u)\equiv\Dsc\bigl(B_*(u)\bigr).
\label{eq:support-domain}
\end{equation}
\revp{The region $B_*(u)$, equivalently its domain $\mathcal D_*(u)$, is a common compact partner support if, for every $i$:}
\begin{enumerate}
\item \revp{the logical partner algebra $\mathfrak L_{A_{B_i}}$ has a representative supported in a subdomain $d_i\subset\mathcal D_*(u)$;}
\item \revp{that representative is reconstructable from $R_i$, geometrically $d_i\subset {\rm EW}(R_i)$;}
\item \revp{the same domain $\mathcal D_*(u)$ is used for the whole family at the chosen epoch.}
\end{enumerate}
\end{definition}

\revp{Here $\mathfrak L_{A_{B_i}}$ denotes the logical operator algebra associated with the interior partner $A_{B_i}$ in the semiclassical code subspace. It should not be identified with the local operator algebra of a sharply defined spacetime region.
We suppress $u$ below when no confusion can arise. The support may evolve with $u$. It is therefore common only with respect to the chosen family of decoders at that epoch.}

\revp{For any achronal representative $X$ used below, $S_{\rm ren}(X)$ denotes its renormalized semiclassical matter entropy, evaluated with one fixed renormalization prescription. The same prescription is used for the spatial and null representatives of a given support domain.}

\begin{definition}[Bounded semiclassical null realization]
\label{def:bounded-null}
A bounded semiclassical null realization of $\mathcal D_*$ is an achronal null hypersurface $L_*\subset\Msc$ satisfying all of the following conditions:
\begin{enumerate}
\item $L_*$ is generated inward from the edge $\sigma_*$ along the null direction relevant to the support domain;
\item the closure $\overline{L_*}$ is compact in $\Msc$, every generator ends on a regular cut or caustic within $\Msc$, and no generator is required to reach a Planckian singularity or a boundary where semiclassical EFT has already failed;
\item \revp{$L_*$ and $B_*$ have the same domain of dependence within the cutoff spacetime,}
\begin{equation}
\Dsc(L_*)=\Dsc(B_*)=\mathcal D_*;
\label{eq:domain-equality}
\end{equation}
\item \revp{let $\sigma_0=\sigma_*$ and $\sigma_1$ denote the initial and terminal cuts of $L_*$. Choose half-Cauchy representatives $\Sigma_0$ and $\Sigma_1$ bounded by these cuts and differing by the complete null segment. With one fixed subtraction prescription, define the finite quantum-lightsheet entropy}
\begin{equation}
\revp{S_{\rm q}(L_*)\equiv S_{\rm out}(\Sigma_1)-S_{\rm out}(\Sigma_0).}
\label{eq:qbb-entropy}
\end{equation}
\revp{Because $L_*$ is a complete representative of the same support domain and the evolution is unitary in the semiclassical code subspace, this null entropy represents the same renormalized domain entropy as $B_*$,}
\begin{equation}
\revp{S_{\rm q}(L_*)=S_{\rm ren}(B_*).}
\label{eq:qbb-domain-match}
\end{equation}
\end{enumerate}
\revp{Condition 2 is what we mean by a bounded semiclassical closure.}
\end{definition}

Definition~\ref{def:bounded-null} supplies the precise meaning needed for the null comparison. It is neither a claim about every black-hole interior nor an assumption that the singularity is absent globally. It asks only whether the compact support proposed for the AMPS partners has at least one regular, bounded null representative \revp{with the same domain of dependence} in the same semiclassical region.
\revp{Equation~\eqref{eq:qbb-domain-match} uses the fact that \(B_*\) and \(L_*\) are complete achronal representatives of the same semiclassical domain of dependence. Under the assumed unitary evolution within that domain, passing from the spatial slice \(B_*\) to the null hypersurface \(L_*\) changes the representation of the same quantum state, but not the quantum information associated with the domain. This is the same domain-of-dependence logic used in entropy-based singularity arguments \cite{Bousso:2022cun}. The entropy assigned to the null representative is therefore identified with the renormalized entropy of the corresponding spatial support. As defined in Eq.~\eqref{eq:qbb-entropy}, the null entropy is the finite difference between the entropy on the final and initial cuts of the null hypersurface. This cut-to-cut quantity is the one constrained by the quantum Bousso bound~\cite{Bousso:2014sda,Strominger:2003br,Bousso:2015mna}.}

\section{From a QES edge to a quantum-lightsheet candidate}
\label{sec:qes}

The entropy comparison requires a null representative whose generalized expansion is nonpositive. \revp{Here the generalized expansion $\Theta_{\rm gen}$ is the null variation of generalized entropy per unit transverse area.} Quantum extremality provides the initial condition, while restricted quantum focusing controls the subsequent sign.

\begin{definition}[Quantum lightsheet]
\label{def:quantum-lightsheet}
Let $\sigma$ be a codimension-two surface and let $L$ be a null hypersurface generated from $\sigma$ along a chosen null direction. The relevant segment of $L$ is a quantum lightsheet if
\begin{equation}
\Theta_{\rm gen}\leq 0
\label{eq:quantum-lightsheet}
\end{equation}
everywhere on that segment.
\end{definition}

\begin{lemma}[A QES edge seeds a quantum lightsheet]
\label{lem:qes-lightsheet}
Let the edge $\sigma_*=\partial B_*$ of the proposed island support be a compact quantum extremal surface~\cite{Engelhardt:2014gca}, so that the generalized expansion in the relevant null direction satisfies
\begin{equation}
\Theta_{\rm gen}(\sigma_*)=0.
\label{eq:qes-expansion}
\end{equation}
Assume continuity of $\Theta_{\rm gen}$ along the generators and the restricted quantum focusing condition, namely that a nonpositive generalized expansion cannot cross to positive values along the corresponding null hypersurface~\cite{Shahbazi-Moghaddam:2022hbw}. Then the inward null hypersurface generated from $\sigma_*$ obeys $\Theta_{\rm gen}\leq0$ throughout the semiclassical segment and is a quantum-lightsheet candidate.
\end{lemma}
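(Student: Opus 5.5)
The plan is to parametrize each generator of the inward null hypersurface $L$ from $\sigma_*$ by an affine parameter $\lambda\in[0,\lambda_{\rm end}]$, with $\lambda=0$ on $\sigma_*$. Here $\lambda_{\rm end}$ is the first value at which the generator reaches a caustic, a regular terminal cut, or the boundary of $\Msc$. Let $\Theta_{\rm gen}(\lambda)$ denote the generalized expansion along that generator. It is computed for the fixed null direction and with the same subtraction prescription used in Definition~\ref{def:bounded-null}. The claim then reduces to a one-dimensional statement on each generator: a continuous function that starts at $0$ and cannot pass from nonpositive to positive values stays nonpositive on the whole interval.

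First, I would fix the initial value. Quantum extremality of $\sigma_*$, Eq.~\eqref{eq:qes-expansion}, gives $\Theta_{\rm gen}(0)=0$ pointwise on $\sigma_*$ for the relevant null direction. In particular $\Theta_{\rm gen}(0)\le 0$. Second, I would apply restricted quantum focusing in its local form, $\partial_\lambda\Theta_{\rm gen}\le 0$ at any point where $\Theta_{\rm gen}$ vanishes, or equivalently in the no-crossing form stated in the lemma.

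Third, I would argue by contradiction. Suppose some generator has a point $\lambda_1$ with $\Theta_{\rm gen}(\lambda_1)>0$. Set $\lambda_0=\sup\{\lambda<\lambda_1:\Theta_{\rm gen}(\lambda)\le 0\}$, which exists because $\Theta_{\rm gen}(0)\le0$. By continuity $\Theta_{\rm gen}(\lambda_0)=0$, and $\Theta_{\rm gen}>0$ on $(\lambda_0,\lambda_1]$. This is exactly a crossing from nonpositive to positive values along $L$, which the restricted quantum focusing hypothesis forbids. Hence $\Theta_{\rm gen}\le0$ on every generator, all the way to $\lambda_{\rm end}$. This makes the segment a quantum lightsheet in the sense of Definition~\ref{def:quantum-lightsheet}, and therefore a candidate for the realization in Definition~\ref{def:bounded-null}.

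The main obstacle is the marginal case at $\lambda=0$, where $\Theta_{\rm gen}$ starts exactly at zero. A merely continuous function can leave zero upward immediately unless the focusing condition is read strongly enough to exclude this. I would handle it by stating restricted quantum focusing as the inequality $\partial_\lambda\Theta_{\rm gen}\le0$ wherever $\Theta_{\rm gen}=0$. If necessary I would strengthen it to a differential inequality of the form $\partial_\lambda\Theta_{\rm gen}\le C\,|\Theta_{\rm gen}|$ on the semiclassical segment, with $C$ bounded inside $\Msc$. A Grönwall argument then shows that $\max(\Theta_{\rm gen},0)$ stays zero if it starts at zero.

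A secondary issue is caustics and generators leaving $L$. There I would simply truncate each generator at its first caustic or at the cutoff boundary, since the lemma only asserts the sign on the semiclassical segment. This is consistent with Condition 2 of Definition~\ref{def:bounded-null}.
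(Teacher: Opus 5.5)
Your proposal is correct and follows essentially the same argument as the paper: start from $\Theta_{\rm gen}=0$ on the QES edge, use continuity to locate a first crossing through zero, and note that restricted quantum focusing, in the no-crossing form assumed in the lemma, forbids that crossing. Your supremum construction simply makes the paper's ``first crossing'' step explicit. Because the lemma takes the no-crossing form as its hypothesis, the marginal case at $\lambda=0$ is already covered and the Gr\"onwall strengthening is not needed.
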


\begin{proof}
The initial cut has $\Theta_{\rm gen}=0$. If the expansion became positive at a later cut, continuity would imply a first crossing through zero with increasing generalized expansion. This is excluded by restricted quantum focusing. Hence $\Theta_{\rm gen}\leq0$ on the relevant segment.
\end{proof}

The lemma gives a lightsheet candidate; it does not prove bounded closure. \revp{Boundedness and equality of the domains of dependence are separate regularity requirements given in Definition~\ref{def:bounded-null}.} This separation is important: quantum focusing controls the sign of the expansion, whereas the global endpoint structure of the generators determines whether a bounded null representative exists.

\section{Entropy obstruction and the common-support dichotomy}
\label{sec:obstruction}

We now state all assumptions used in the entropy argument. %They are separated so that the theorem has no implicit geometric or information-theoretic input.

\begin{assumption}[Entropy loading by distinct late modes]
\label{ass:entropy-loading}
There is a family of operationally distinguishable late Hawking modes $\{B_i\}_{i=1}^N$ whose \revp{logical partner algebras $\{\mathfrak L_{A_{B_i}}\}_{i=1}^N$ are encoded in $\mathcal A(\mathcal D_*)$. In the finite-dimensional semiclassical code subspace, let $a_i$ denote the logical subsystem carrying the $i$th partner sector and let $a_{<i}=a_1\cdots a_{i-1}$. \revp{We use the standard conditional entropy\footnote{Writing the loading condition through conditional entropies avoids assuming that ordinary marginal entropies are additive.} $S(a_i\mid a_{<i})\equiv S(a_1\cdots a_i)-S(a_1\cdots a_{i-1})$.} We assume that genuinely new logical information is loaded by each distinguishable late mode in the sense that
\begin{equation}
S(a_i\mid a_{<i})\geq s_0>0
\label{eq:conditional-loading}
\end{equation}
for the family under consideration, up to subleading corrections controlled by code redundancy. By the chain rule,
\begin{equation}
S(a_1\cdots a_N)=\sum_{i=1}^N S(a_i\mid a_{<i})\geq Ns_0.
\label{eq:logical-chain}
\end{equation}
The renormalized entropy of the \revp{common} support is assumed to contain this logical contribution, so that
\begin{equation}
S_{\rm ren}(B_*)\geq S(a_1\cdots a_N)\geq Ns_0.
\label{eq:SB-lower}
\end{equation}}
\end{assumption}

\begin{remark}[What the lower bound counts]
\label{rem:counting}
The index $i$ labels distinct late Hawking modes, not different reconstructions of the same logical partner. Quantum error correction allows one logical operator to have redundant representatives in different physical regions~\cite{Almheiri:2014lwa,Dong:2016eik}. It does not identify operationally distinct logical modes while preserving their distinguishability under decoding. The assumption therefore counts independent logical late-mode sectors supported in $\mathcal D_*$, not the number of radiation regions from which one sector can be reconstructed~\cite{Hayden:2007cs,Harlow:2014yka}. 
\end{remark}

\begin{remark}[Semiclassical validity]
Assumption~\ref{ass:entropy-loading} is not an assertion of exact microscopic tensor factorization in quantum gravity. It is a code-subspace assumption appropriate to the same semiclassical regime in which AMPS treats separated late wave packets as distinguishable excitations. \revp{It does not assume exact tensor factorization of the full quantum-gravity Hilbert space. It assumes only a finite-dimensional code-subspace regime in which each operationally distinguishable partner contributes a positive conditional entropy of at least $s_0$ over the range of modes used in the theorem.}
\end{remark}

\begin{assumption}[Area control]
\label{ass:area-control}
\revp{At the chosen evaporation epoch $u$, the edge of the common support remains associated with the evaporating black hole and satisfies}
\begin{equation}
A(\sigma_*)=A(\partial B_*)\leq A_{\rm BH}(u).
\label{eq:area-control}
\end{equation}
\revp{Both $A_{\rm BH}(u)$ and the support itself may change with $u$.}
\end{assumption}

\begin{assumption}[\revp{Bounded null realization}]
\label{ass:bounded-null}
At the chosen epoch, the common support domain $\mathcal D_*$ admits a bounded semiclassical null realization $L_*$ in the sense of Definition~\ref{def:bounded-null}. Let $\sigma_0=\sigma_*$ and $\sigma_1$ be its initial and terminal cuts. The corresponding null segment is the quantum lightsheet supplied by Lemma~\ref{lem:qes-lightsheet}, and it obeys the quantum Bousso bound
\begin{equation}
S_{\rm q}(L_*)\leq
\frac{A(\sigma_0)-A(\sigma_1)}{4G}.
\label{eq:qbb-bound}
\end{equation}
This is the semiclassical entropy-bound input used in the theorem. It is the finite, fine-grained quantum Bousso bound obtained from the quantum focusing conjecture when the initial quantum expansion is nonpositive~\cite{Bousso:2015mna}. In our setup, this condition is natural because the initial cut is a QES, so $\Theta_{\rm gen}(\sigma_0)=0$. Restricted quantum focusing then implies that $\Theta_{\rm gen}$ does not become positive along the null segment considered here.

Related quantum entropy bounds are known in other semiclassical settings. A vacuum-subtracted quantum Bousso bound has been proved for free fields in the weak-backreaction regime without assuming the null energy condition~\cite{Bousso:2014sda}. Earlier work by Strominger and Thompson proposed a quantum covariant entropy bound~\cite{Strominger:2003br}, and explicit semiclassical examples have also been studied in JT gravity~\cite{Franken:2023ugu}.
\end{assumption}

\revp{The quantum bound itself constrains a finite entropy difference between null cuts. Definition~\ref{def:bounded-null} makes the necessary same-domain identification explicit: for a complete bounded representative, that cut-to-cut entropy is the renormalized entropy of the support domain. Thus the theorem does not replace a cut-to-cut quantum bound by an absolute entropy bound; it applies the quantum bound directly to a null representative of the full support domain.}

\revp{Following the entropy singularity theorem of Bousso and Shahbazi-Moghaddam~\cite{Bousso:2022cun}, we use ``hyperentropic'' in the same sense: a spatial region $B$ is hyperentropic when its renormalized entropy exceeds one quarter of the area of its boundary in Planck units,
\begin{equation}
S_{\rm ren}(B)>\frac{A(\partial B)}{4G}.
\label{eq:hyperentropic-definition}
\end{equation}
For the \revp{common} support, $\sigma_*=\partial B_*$, so this definition becomes Eq.~\eqref{eq:hyperentropic}.} %This is the same renormalized entropy--area criterion used in the entropy singularity theorem. What is new here is the mechanism that drives a proposed \rev{common partner support} across that threshold.

The motivation for Assumption~\ref{ass:bounded-null} is limited and specific. \revp{A common support proposed as one regular semiclassical interior should admit a controlled bounded null representative of the same domain before any Planckian breakdown.} This does not require that the full black-hole interior, or a maximally extended generator, avoid the singularity. \revp{If no such representative exists, the common support has already failed this regularity criterion; if one exists and the assumed entropy bound applies, the entropy comparison below follows. Thus the theorem is conditional in exactly the relevant sense.}

\begin{proposition}[Hyperentropic crossover]
\label{prop:hyperentropic}
Under Assumptions~\ref{ass:entropy-loading} and~\ref{ass:area-control}, if
\begin{equation}
Ns_0>\frac{A_{\rm BH}(u)}{4G},
\label{eq:crossover}
\end{equation}
then the \revp{common} support is hyperentropic in the sense of Eq.~\eqref{eq:hyperentropic-definition}:
\begin{equation}
\revp{S_{\rm ren}(B_*)}>\frac{A(\sigma_*)}{4G}.
\label{eq:hyperentropic}
\end{equation}
\end{proposition}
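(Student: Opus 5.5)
The proof is a direct chain of inequalities. It uses only Assumption~\ref{ass:entropy-loading}, through Eq.~\eqref{eq:SB-lower}, and Assumption~\ref{ass:area-control}, through Eq.~\eqref{eq:area-control}. Neither the bounded null realization nor the quantum Bousso bound is needed at this stage. They enter only later, in the dichotomy.

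The first step invokes the loading assumption. The conditional-entropy form Eq.~\eqref{eq:conditional-loading} together with the chain rule Eq.~\eqref{eq:logical-chain} gives $S(a_1\cdots a_N)\geq Ns_0$. The assumption that the renormalized support entropy contains this logical contribution then gives $S_{\rm ren}(B_*)\geq Ns_0$.

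The second step imposes the crossover hypothesis Eq.~\eqref{eq:crossover}, which yields $S_{\rm ren}(B_*)\geq Ns_0>A_{\rm BH}(u)/4G$. The third step uses area control: since $A(\sigma_*)\leq A_{\rm BH}(u)$ and $G>0$, we have $A_{\rm BH}(u)/4G\geq A(\sigma_*)/4G$. Concatenating the three steps gives
\begin{equation*}
S_{\rm ren}(B_*)\geq Ns_0>\frac{A_{\rm BH}(u)}{4G}\geq\frac{A(\sigma_*)}{4G}.
\end{equation*}
Because $\sigma_*=\partial B_*$, this is exactly the hyperentropic condition Eq.~\eqref{eq:hyperentropic-definition}, in the form Eq.~\eqref{eq:hyperentropic}.

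The only delicate point is bookkeeping. The ``subleading corrections controlled by code redundancy'' in Assumption~\ref{ass:entropy-loading} should be treated as absorbed into $s_0$, or else assumed small compared with the strict gap $Ns_0-A_{\rm BH}(u)/4G$. Stated this way, the strict inequality survives and the argument stays clean. A second point is that all quantities must be evaluated at the same epoch $u$ and with the same renormalization prescription. Definition~\ref{def:common-support} and the convention fixing $S_{\rm ren}$ already guarantee both. I do not expect any genuine obstacle beyond these consistency checks.
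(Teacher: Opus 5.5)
Your proposal is correct and follows the paper's own proof: you chain Eq.~\eqref{eq:SB-lower} through the crossover hypothesis Eq.~\eqref{eq:crossover} and area control Eq.~\eqref{eq:area-control} to obtain $S_{\rm ren}(B_*)\geq Ns_0>A_{\rm BH}(u)/4G\geq A(\sigma_*)/4G$. The paper does not address the code-redundancy corrections, since it states Eq.~\eqref{eq:SB-lower} as an exact inequality; your remark about absorbing them into $s_0$ is a reasonable clarification of that point.
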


\begin{proof}
Assumption~\ref{ass:entropy-loading} gives $\revp{S_{\rm ren}(B_*)}\geq Ns_0$, while Assumption~\ref{ass:area-control} gives $A(\sigma_*)\leq A_{\rm BH}(u)$. Therefore,
\begin{equation}
\revp{S_{\rm ren}(B_*)}\geq Ns_0>
\frac{A_{\rm BH}(u)}{4G}
\geq\frac{A(\sigma_*)}{4G}.
\end{equation}
\end{proof}

\begin{theorem}[\revp{Common-support no-go}]
\label{thm:nogo}
\revp{Under Assumptions~\ref{ass:entropy-loading}--\ref{ass:bounded-null}, no regular common compact partner support satisfying the stated bounded-realizability and entropy-bound conditions can persist once Eq.~\eqref{eq:crossover} is reached.}
\end{theorem}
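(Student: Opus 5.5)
The argument will be by contradiction: suppose a common compact partner support $B_*(u)$, with domain $\mathcal D_*(u)$, exists at an epoch where Eq.~\eqref{eq:crossover} holds and satisfies Assumptions~\ref{ass:entropy-loading}--\ref{ass:bounded-null}. The aim is to produce a strict lower bound and a non-strict upper bound on the same quantity, $S_{\rm ren}(B_*)$, that cannot both hold.

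\emph{Lower bound.} Proposition~\ref{prop:hyperentropic} applies directly. It combines the chain-rule loading bound \eqref{eq:SB-lower} with the area control \eqref{eq:area-control} to give
\begin{equation*}
S_{\rm ren}(B_*)\geq Ns_0>\frac{A_{\rm BH}(u)}{4G}\geq\frac{A(\sigma_*)}{4G},
\end{equation*}
so the support is hyperentropic in the sense of Eq.~\eqref{eq:hyperentropic-definition}.

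\emph{Upper bound.} Here I use Assumption~\ref{ass:bounded-null}. It supplies a bounded semiclassical null realization $L_*$ with initial cut $\sigma_0=\sigma_*$ and terminal cut $\sigma_1$. By Lemma~\ref{lem:qes-lightsheet}, the QES edge together with restricted quantum focusing makes the segment a quantum lightsheet, so the quantum Bousso bound \eqref{eq:qbb-bound} applies. Definition~\ref{def:bounded-null} has $L_*$ and $B_*$ sharing the domain $\Dsc(L_*)=\Dsc(B_*)$, which gives the identification \eqref{eq:qbb-domain-match}, $S_{\rm q}(L_*)=S_{\rm ren}(B_*)$. Dropping the nonnegative term $A(\sigma_1)/4G$ then yields
\begin{equation*}
S_{\rm ren}(B_*)=S_{\rm q}(L_*)\leq\frac{A(\sigma_0)-A(\sigma_1)}{4G}\leq\frac{A(\sigma_*)}{4G}.
\end{equation*}

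\emph{Contradiction and conclusion.} This upper bound contradicts the strict lower bound. Hence at least one of the three ingredients must fail: the common-support conditions of Definition~\ref{def:common-support} (with the loading they imply), the existence of a bounded null realization, or the applicability of the quantum Bousso bound. Since the theorem builds bounded realizability and the entropy bound into what counts as a \emph{regular} common support, no such support exists at that epoch. To justify the word ``persist,'' I will note that Eq.~\eqref{eq:crossover} is the only epoch-dependent input. As $N(u)$ grows and $A_{\rm BH}(u)$ shrinks during evaporation, the crossover, once reached, continues to hold at later epochs for the family under consideration, so the obstruction applies at every subsequent epoch in the semiclassical regime.

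\emph{Expected main obstacle.} The delicate step is the identification $S_{\rm q}(L_*)=S_{\rm ren}(B_*)$. The quantum Bousso bound controls a cut-to-cut difference, whereas the loading assumption bounds an absolute renormalized entropy. I will lean on Definition~\ref{def:bounded-null}, condition~4, and the shared fixed subtraction prescription to keep the two quantities the same. I will also remark that requiring $\overline{L_*}$ to be compact in $\Msc$ is exactly what keeps this step inside the semiclassical domain, without extending any generator to the singularity.
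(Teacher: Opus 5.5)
Your argument is the same as the paper's: the strict hyperentropic lower bound from Proposition~\ref{prop:hyperentropic}, then the same-domain identification \eqref{eq:qbb-domain-match} together with the quantum Bousso bound \eqref{eq:qbb-bound}, dropping $A(\sigma_1)\geq 0$, for the upper bound, and the resulting contradiction. The only thing to drop is your monotonicity gloss on ``persist.'' The paper does not assume that $N(u)$ grows or that Eq.~\eqref{eq:crossover} stays satisfied at later epochs, and it needs no such claim: it applies the fixed-epoch contradiction at whatever epoch Eq.~\eqref{eq:crossover} holds, with the assumptions imposed at that epoch.
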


\begin{proof}
Proposition~\ref{prop:hyperentropic} gives
\begin{equation}
\revp{S_{\rm ren}(B_*)}>\frac{A(\sigma_*)}{4G}.
\label{eq:proof-hyper}
\end{equation}
Assumption~\ref{ass:bounded-null} gives the same-domain matching $S_{\rm q}(L_*)=S_{\rm ren}(B_*)$ and the quantum Bousso bound. Hence
\begin{equation}
\revp{S_{\rm ren}(B_*)}=S_{\rm q}(L_*)
\leq\frac{A(\sigma_*)-A(\sigma_1)}{4G}
\leq\frac{A(\sigma_*)}{4G},
\end{equation}
where $A(\sigma_1)\geq0$. This contradicts Eq.~\eqref{eq:proof-hyper}.
\end{proof}

\begin{corollary}[\revp{Common-support dichotomy}]
\label{cor:dichotomy}
Assume the first two assumptions. Once Eq.~\eqref{eq:crossover} holds, the common support is hyperentropic. At that stage, at least one of the following must occur:
\begin{enumerate}
\item the proposed common support no longer contains representatives of all partner sectors in the chosen family;
\item the support does not admit a bounded null realization with the same domain of dependence within $\Msc$;
\item the quantum Bousso bound does not apply to this null realization.
\end{enumerate}
If the quantum Bousso bound is assumed to remain valid and a bounded same-domain null realization is required for a regular semiclassical AMPS interior, then the common support cannot exist.
\end{corollary}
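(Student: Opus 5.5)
The corollary is a contrapositive repackaging of Theorem~\ref{thm:nogo}, with Proposition~\ref{prop:hyperentropic} supplying its first sentence. I will work only under Assumptions~\ref{ass:entropy-loading} and~\ref{ass:area-control}, and treat the content of Assumption~\ref{ass:bounded-null} as hypotheses to be tested rather than granted. The first step is immediate. Given Eq.~\eqref{eq:crossover}, Proposition~\ref{prop:hyperentropic} yields Eq.~\eqref{eq:hyperentropic}, so $S_{\rm ren}(B_*)>A(\sigma_*)/4G$. This establishes the claim that the support is hyperentropic.

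For the trichotomy I will argue by contradiction and suppose that none of items 1--3 occurs. Negating item 1 means every partner algebra $\mathfrak L_{A_{B_i}}$ has a representative in $\mathcal D_*$. This is exactly what Assumption~\ref{ass:entropy-loading} needs to apply to the whole family, and it gives the lower bound Eq.~\eqref{eq:SB-lower}. I should note here that item 1 is what licenses the loading bound: if some sectors are absent, $N$ effectively drops and Eq.~\eqref{eq:crossover} need not hold for the sectors actually present. Negating item 2 gives a null hypersurface $L_*$ satisfying Definition~\ref{def:bounded-null}. In particular it satisfies the domain equality Eq.~\eqref{eq:domain-equality}, and hence the entropy matching Eq.~\eqref{eq:qbb-domain-match}. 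Negating item 3 means the quantum Bousso bound Eq.~\eqref{eq:qbb-bound} holds on $L_*$. Lemma~\ref{lem:qes-lightsheet} makes this natural, since the QES edge forces $\Theta_{\rm gen}\le 0$, so $L_*$ is a legitimate lightsheet.

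Once all three negations are in hand, Assumption~\ref{ass:bounded-null} holds in full, and the chain in the proof of Theorem~\ref{thm:nogo} runs verbatim:
\begin{equation}
S_{\rm ren}(B_*)=S_{\rm q}(L_*)\le \frac{A(\sigma_*)-A(\sigma_1)}{4G}\le\frac{A(\sigma_*)}{4G},
\end{equation}
using $A(\sigma_1)\ge 0$. This contradicts the hyperentropic inequality from the first step, so at least one of items 1--3 must occur. The final sentence of the corollary then follows by elimination. If item 3 is excluded by assumption and item 2 is excluded by the regularity requirement, only item 1 remains. Item 1 says the proposed object fails to be a common compact partner support in the sense of Definition~\ref{def:common-support}, so no such support exists.

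The main obstacle is the logical bookkeeping, not any estimate. The three alternatives must correspond exactly to the three hypotheses consumed in the theorem: full partner content feeding Eq.~\eqref{eq:SB-lower}, existence of a same-domain bounded $L_*$ feeding Eq.~\eqref{eq:qbb-domain-match}, and applicability of Eq.~\eqref{eq:qbb-bound}. Checking that no further hidden input is needed reduces to checking that the QES edge condition is part of the support's definition and not an extra assumption. I would handle this by noting that Lemma~\ref{lem:qes-lightsheet} only supplies the sign of $\Theta_{\rm gen}$, which is subsumed under item 3.
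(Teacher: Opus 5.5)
Your proposal is correct and follows the paper's own argument. Proposition~\ref{prop:hyperentropic} gives hyperentropy, and a bounded same-domain null realization obeying the quantum Bousso bound yields $S_{\rm ren}(B_*)\le A(\partial B_*)/4G$ via Eqs.~\eqref{eq:qbb-domain-match} and~\eqref{eq:qbb-bound}, so the three conditions cannot all hold; your item-by-item negation is the same contradiction with more explicit bookkeeping, noting that the negation of item~1 is in effect already built into Assumption~\ref{ass:entropy-loading}.
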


\begin{proof}
Assumptions 1 and 2 imply Eq.~\eqref{eq:hyperentropic},
\[
S_{\rm ren}(B_*)>\frac{A(\partial B_*)}{4G}.
\]
If the same support also admitted a bounded null realization with the same domain of dependence, Eqs.~\eqref{eq:qbb-domain-match} and~\eqref{eq:qbb-bound} would instead give
\[
S_{\rm ren}(B_*)\leq \frac{A(\partial B_*)}{4G}.
\]
The two inequalities are incompatible. Therefore the common support, the bounded same-domain null realization, and the validity of the quantum Bousso bound cannot all hold at the same time.
\end{proof}

Corollary~\ref{cor:dichotomy} also clarifies the role of entropy-based singularity arguments. We use ``hyperentropic'' in the same sense as Bousso and Shahbazi-Moghaddam~\cite{Bousso:2022cun}: the renormalized entropy of a spatial region exceeds one quarter of its boundary area in Planck units. In entropy-based singularity arguments, a hyperentropic region can obstruct regular semiclassical null development~\cite{Bousso:2022cun,Bousso:2022tdb}. In our setup, this possibility corresponds to point 2 above. The null hypersurface may fail to close regularly within $\Msc$, or its generators may leave the semiclassical regime. This does not invalidate the theorem. It means that the proposed common support does not admit the regular bounded null realization required in the argument. The proof does not rely on a pre-existing Planckian singularity; all entropy inequalities and domain-of-dependence comparisons are made within the semiclassical region.

\begin{figure}[t]
\centering
\begin{tikzpicture}[scale=0.95]
\draw[thick] (0,0) circle (1.15);
\node at (0,0.05) {$\rev{B_*}$};
\foreach \x/\y in {-0.45/0.45,0.35/0.55,-0.15/-0.35,0.55/-0.35} {
  \fill (\x,\y) circle (1.9pt);
}
\node at (0,-1.55) {partners $A_{B_i}$};
\node at (-3.00,0.55) {$B_1,B_2,\ldots,B_N$};
\draw[->] (-1.65,0.45) -- (-1.2,0.35);
\draw[->] (-1.65,0.25) -- (-0.55,0.45);
\draw[->] (-1.65,0.05) -- (0.5,-0.25);
\node at (3.5,0.55) {$\revp{S_{\rm ren}(B_*)}\gtrsim Ns_0$};
\node at (3.5,0.1) {$A(\sigma_*)\leq A_{\rm BH}(u)$};
\draw[->] (1.25,0.35) -- (1.9,0.35);
\draw[->] (1.25,-0.1) -- (1.9,-0.1);
\end{tikzpicture}
\caption{The entropy loading counts independent late Hawking modes and their logical partners, not multiple reconstructions of one partner. Once the \revp{common} support becomes hyperentropic, \revp{common support}, bounded semiclassical null realizability, and \revp{validity of the entropy bound} cannot all be maintained.}
\label{fig:fixed-support}
\end{figure}

\section{Discussion and conclusion}
\label{sec:discussion}

\subsection{Why the bounded realization is an AMPS regularity condition}

The bounded-null condition is not imposed on generic black-hole interiors. It applies only to the support domain used to represent the AMPS partners inside $\Msc$. The AMPS contradiction is formulated in regular near-horizon EFT, before an infalling observer reaches Planckian curvature. A common compact support proposed to represent the corresponding interior partners must therefore admit a controlled semiclassical description in that same regime. Requiring at least one AMPS-relevant bounded null realization provides a test of this regularity.

The existence of a singularity in the global black-hole spacetime does not affect this requirement. Definition~\ref{def:bounded-null} allows regular cutoff boundaries of $\Msc$ before semiclassical EFT breaks down. It requires only that the finite AMPS support domain admit a bounded null realization within this semiclassical region. If no such realization exists, the proposal falls into the second branch of Corollary~\ref{cor:dichotomy}. If it does exist, the entropy-bound contradiction follows after the hyperentropic crossover. Thus bounded null realizability is an explicit regularity assumption on the proposed common support, not a statement about the global completion of the black-hole interior.

\subsection{Region dependence and quantum error correction}

The result does not challenge the ordinary island prescription in Eq.~\eqref{eq:island-formula}. Standard islands remain region-dependent: different radiation systems may have different quantum extremal surfaces, Page times, and entanglement wedges.

The common-support condition is weaker than requiring every radiation subsystem to reconstruct the whole support. Each $R_i$ need reconstruct only its own partner sector. The obstruction therefore does not arise because any one radiation subsystem is asked to reconstruct too much of the interior. It arises because an increasing number of distinct partner sectors are placed in the same finite semiclassical support.

The entropy estimate also does not count the same logical partner several times because it has different reconstructions. Such redundancy is expected in holographic quantum error correction. The lower bound concerns distinct late Hawking modes $B_1,\ldots,B_N$ and their corresponding logical partner sectors. Redundant encoding can change where a logical sector is represented, but it does not identify distinct logical sectors with one another. This is the content of Assumption~\ref{ass:entropy-loading}.

The physical conclusion is therefore limited but sharp. The subsystem dependence of the AMPS reconstruction cannot, under the stated assumptions, be replaced by one finite common support that remains regular, bounded, and compatible with the quantum entropy bound.

\subsection{Time dependence}

The common support is defined at a fixed evaporation epoch $u$. Both $N(u)$ and $A_{\rm BH}(u)$ may change during evaporation, and the support itself may move. The theorem applies at any epoch at which Eq.~\eqref{eq:crossover} is satisfied. It does not require the same spatial region to remain fixed throughout the evaporation history.

\subsection{Solvable island models}

Solvable island models can avoid the obstruction in the expected way. Different radiation subsystems can enter island phases at different times, and there need not be a single epoch at which all AMPS-relevant partners are represented in one compact support~\cite{Geng:2020fxl,Geng:2020qvw,Geng:2021mic}. Recent analyses also emphasize that information encoding in island models can depend on the particular setup and need not define a single common-support interior map~\cite{Geng:2025rov,Geng:2025efs,Bao:2025plr}.

The theorem therefore shows that a consistent model must retain region-dependent reconstruction, give up a bounded common support, or leave the semiclassical regime.

\subsection{Limitations}

The result is semiclassical and conditional. The quantum Bousso bound in Assumption~\ref{ass:bounded-null} is the semiclassical entropy-bound input. The positive conditional-entropy condition in Assumption~\ref{ass:entropy-loading} is a code-subspace assumption rather than a theorem of exact quantum gravity. The cut-to-cut quantum Bousso bound constrains a finite entropy difference, while the same-domain identification in Eq.~\eqref{eq:qbb-domain-match} is imposed separately. A fully quantum treatment would require a regulator-independent algebraic formulation of this identification and of the corresponding generalized-entropy differences.

These limitations define the scope of the result. The paper rules out a regular semiclassical common compact partner support under the stated assumptions. It does not rule out ordinary region-dependent islands or a microscopic description of the interior beyond semiclassical gravity.

\vspace{2mm}

To summarize, we have shown that a common compact support containing an increasing number of distinguishable AMPS partner sectors eventually becomes hyperentropic. Once
\[
Ns_0>\frac{A_{\rm BH}}{4G},
\]
its renormalized entropy exceeds the area allowed by the support boundary. If the same support admits a bounded quantum lightsheet with the same domain of dependence, the quantum Bousso bound gives the opposite inequality. The common support, bounded null realizability, and validity of the quantum entropy bound therefore cannot all hold at the same time.

The main point is not that islands fail to resolve AMPS. Rather, the mode-by-mode, region-dependent reconstructions cannot in general be replaced by one compact semiclassical support for all late-mode partners. In this sense, the region dependence of island reconstruction is part of the way the semiclassical framework avoids the common-support obstruction.

\section*{Acknowledgments}

I thank Victor Franken for helpful comments, in particular for pointing out the relevance of the restricted quantum focusing condition to the lightsheet construction. I also thank Hao Geng for useful comments on the relation between the present obstruction and explicit solvable island models.

\bibliography{bib}
\bibliographystyle{JHEP}

\end{document}